\documentclass[prl,twocolumn,showpacs]{revtex4-2}

\usepackage{amsmath}
\usepackage{latexsym}
\usepackage{amssymb}
\usepackage{graphicx}
\usepackage[colorlinks=true, citecolor=blue, urlcolor=blue]{hyperref}
\usepackage{float}
\usepackage{amsfonts}
\usepackage{textcomp}
\usepackage{mathpazo}
\usepackage{comment}
\usepackage{xr}

\usepackage{bbm}

\usepackage{xcolor}
\definecolor{myurlcolor}{rgb}{0,.5,.5}
\definecolor{mycitecolor}{rgb}{0,.6,0}
\definecolor{myrefcolor}{rgb}{2,0,0}
\usepackage{hyperref}
\hypersetup{colorlinks,
linkcolor=myrefcolor,
citecolor=mycitecolor,
urlcolor=myurlcolor}

\usepackage[draft]{fixme}
\usepackage{amsmath,bbm}
\usepackage{graphicx}
\usepackage{amsfonts}
\usepackage{amssymb}
\usepackage{amsmath, amssymb, amsthm,verbatim,graphicx,bbm}
\usepackage{mathrsfs}
\usepackage{color,xcolor,longtable}

\usepackage{xr}
\makeatletter
\newcommand*{\addFileDependency}[1]{
  \typeout{(#1)}
  \@addtofilelist{#1}
  \IfFileExists{#1}{}{\typeout{No file #1.}}
}
\makeatother

\newcommand*{\myexternaldocument}[1]{
    \externaldocument{#1}
    \addFileDependency{#1.tex}
    \addFileDependency{#1.aux}
}

\myexternaldocument{manuscript_PRL}

\newcommand{\beq}[0]{\begin{equation}}
\newcommand{\eeq}[0]{\end{equation}}

\newcommand{\one}{\leavevmode\hbox{\small1\normalsize\kern-.33em1}}

\def\be{\begin{equation}}
\def\ee{\end{equation}}
\def\ben{\begin{eqnarray}}
\def\een{\end{eqnarray}}
\def\eea{\end{array}}
\def\bea{\begin{array}}

\newcommand{\Tr}[1]{\mathrm{Tr}#1}
\newcommand{\bei}{\begin{itemize}}
\newcommand{\eei}{\end{itemize}}
\newcommand{\ket}[1]{|#1\rangle}
\newcommand{\bra}[1]{\langle#1|}

\newcommand{\proj}[1]{\ket{#1}\!\!\bra{#1}}

\newcommand{\I}{\mathbbm{1}}

\renewcommand{\emph}[1]{\textbf{#1}}

\makeatletter
\newtheorem*{rep@theorem}{\rep@title}
\newcommand{\newreptheorem}[2]{%
\newenvironment{rep#1}[1]{%
 \def\rep@title{#2 \ref{##1}}%
 \begin{rep@theorem}}%
 {\end{rep@theorem}}}
\makeatother

\theoremstyle{plain}
\newtheorem{thm}{Theorem}
\newtheorem*{thm*}{Theorem}
\newreptheorem{thm}{Theorem}

\theoremstyle{definition}

\theoremstyle{remark}

\usepackage[T1]{fontenc}

\begin{document}

\title{Entanglement is not sufficient for most practical entanglement-based QKD protocols}

\author{
Shubhayan Sarkar\textsuperscript{1}, Tushita Prasad\textsuperscript{2}, and Karol Horodecki\textsuperscript{1} \\
\textit{\textsuperscript{1}Institute of Informatics, Faculty of Mathematics, Physics and Informatics,\\
University of Gdansk, Wita Stwosza 57, 80-308 Gdansk, Poland\\
\textsuperscript{2}ICTQT, University of Gdansk, Wita Stwosza 57, 80-308 Gdansk, Poland
}}

\begin{abstract}
Quantum key distribution (QKD) is the most explored application of quantum information theory. A central problem in entanglement-based QKD (EB-QKD), is whether every entangled state can be used to extract a key. We observe that entanglement is not sufficient for standard practical EB-QKD protocols where the input choices are announced by the parties that want to share a secure key, such as E91 or entanglement-based BB84 type protocols, when even an arbitrarily small amount of leakage of classical side information occurs. We do this by identifying a class of two-qubit isotropic states that are entangled but cannot be used to distil the key under such protocols for any possible measurement by the parties. Counter-intuitively, this gap persists even when the leakage occurs from the "junk" rounds of the protocol, i.e, rounds that cannot be used to generate any key. We then extend this result to arbitrary dimensions and parties by identifying a class of isotropic states that are not useful to extract a secure key under such protocols, even if they are entangled. Finally, we demonstrate that our approach provides a tool to upper-bound the scalability of repeater-based QKD architectures in a protocol-independent manner. Interestingly, we find that allowing for even a tiny noise in the preparation drastically reduces the scalability of the QKD network.
\end{abstract}


\maketitle

{\it{Introduction---}}Quantum Key Distribution (QKD) is one of the most prominent practical applications of quantum information science, enabling information-theoretically secure communication between spatially separated users based on the laws of quantum mechanics. The first QKD protocol, BB84 \cite{bennett1984quantum}, relies on encoding information in non-orthogonal quantum states, ensuring that any eavesdropping attempt necessarily introduces detectable disturbances. This paradigm was later extended to entanglement-based schemes, such as E91 \cite{Ekert1991E91}, which relates security to the violation of Bell inequalities, and BBM92 \cite{bennett1992quantum}, the entanglement-based version of BB84, where key generation is achieved through measurements on shared entangled states.

Security proofs of essentially all standard device-dependent QKD protocols including the six-state protocol \cite{Brus1998}, B92 \cite{Bennett1992}, decoy-state BB84 \cite{Tamaki2008}, and measurement-device-independent QKD \cite{Lo_2012} are based on an equivalent entanglement-based description. Indeed, it was shown in Ref.~\cite{Curty2004} that a necessary condition for secure key distribution is that the legitimate users can certify entanglement in the effectively distributed quantum state. Relatedly, protocols that rely on Bell inequality violations \cite{Primaatmaja_2023, Zapatero2023} necessarily require entangled sources. These results firmly establish entanglement as the fundamental resource underlying quantum key distribution.

This naturally raises a foundational question: Is the presence of entanglement sufficient to guarantee secure key distribution? In other words, can every entangled state be used to establish a secret key between spatially separated parties? Remarkably, Ref.~\cite{karol1} shows that even bound entangled states, from which pure entanglement cannot be distilled, can nevertheless yield a secure key. Moreover, the distillable key was later shown to be upper-bounded by the relative entropy of entanglement and refined in terms of the intrinsic information of an adversary \cite{intinf1}. Since these quantities are entanglement monotones \cite{RevModPhys.81.865}, one might be tempted to conclude that entanglement alone suffices for QKD.

In this work we show that this intuition breaks down once one moves beyond the idealized setting. 
 Practical QKD implementations are inevitably subject to imperfections and classical side-channel leakage. Experimental demonstrations and implementation-level attacks have shown how such leakage can compromise security \cite{leak1,leak2,leak3,leak4,leak5}. On the theoretical side, a substantial body of work has analysed QKD under specific leakage models arising from imperfect or flawed sources, including leaky modulators, Trojan-horse attacks, and model-dependent imperfections \cite{Tamaki2016, Pereira2019, Wang2021, Navarrete2022, Horodecki_2021}. 
Nevertheless, these works focus on particular leakage mechanisms and protocol-specific settings, leaving open a more fundamental question regarding the role of entanglement in the presence of general classical side information.

Addressing these aspects, we develop a general quantitative framework to analyze {\it{classical side-channel leakage}} in standard EB-QKD protocols and allows us to compute upper-bound to key rate in presence of leakage. 
Within this framework, we quantify a fundamental separation between entanglement and key extractability: even arbitrarily small leakage can create a gap between states that are entangled and states from which a secret key can actually be distilled. This gap persists even if the leakage occurs only from "junk" rounds of the protocol, that is, rounds from which no secure key could be extracted. 
By standard protocols, we mean those in which users directly extract the secret key from their measurement outcomes after publicly announcing their inputs, such as entanglement-based BB84 (BBM92), E91, and related schemes. These protocols may also be referred to as quantum-memoryless protocols, since the key is extracted via direct measurements on the received subsystems combined with public communication, without requiring quantum memory. Protocols that rely on additional iterative quantum procedures, such as the Gottesman–Lo protocol \cite{GottesmanLo2003}, fall outside this framework and are not considered here. 


To establish this separation between entanglement and key extractability explicitly, we apply our framework to entanglement-based protocols using isotropic states. We then explicitly construct an attack that renders the secret-key rate zero for a class of entangled states under arbitrarily small leakage. 
We extend the analysis to higher-dimensional and multipartite isotropic states and further derive numerical upper bounds on the maximal key rate achievable from any entangled state in the presence of leakage. We illustrate these bounds for isotropic states \cite{Dur2000}  in dimensions two and three, thereby obtaining quantitative, protocol-independent limits on key generation under classical side information.


Finally, to demonstrate the general applicability of our approach, we also consider repeater-based QKD architectures, which aim to overcome channel loss by dividing long communication distances into shorter elementary links connected via intermediate nodes. Using our approach, we derive fundamental bounds on the maximum number of repeaters that can be concatenated while still enabling secure key generation over standard protocols in the presence of classical side-channel leakage. These results provide operational limits on the scalability of practical QKD networks under leakage, highlighting how even a small amount of leaked information can drastically reduce the maximum achievable distance and number of repeaters, which is pivotal for the design of future quantum communication technologies.\\

{\it{Entanglement-based QKD}}---
Consider a scenario with $N$ parties, labeled $A_i$ for $i=1,\dots,N$, who aim to establish a shared secret key. In an EB-QKD protocol, each party receives a subsystem from a quantum source and performs measurements with freely chosen inputs $x_i$, producing outcomes $a_i$. For compactness, we denote the full set of inputs and outcomes as $\vec{x} = \{x_1,\dots,x_N\}$ and $\vec{a} = \{a_1,\dots,a_N\}$, respectively. The source prepares a joint state $\rho_{A_1\ldots A_N}$, and the measurement operators of party $A_i$ are $\{M_{a_i|x_i}\}$. The resulting correlations are captured by the joint probability distribution
\[
\vec{p} = \{ p(\vec{a}|\vec{x}) \}, \quad 
p(\vec{a}|\vec{x}) = \mathrm{Tr}\Bigg(\bigotimes_{i=1}^N M_{a_i|x_i} \, \rho_{A_1\ldots A_N}\Bigg).
\]

In this work, we focus on standard entanglement-based protocols in which the inputs of both parties are publicly announced. Such protocols are the most studied ones in QKD [for instance see Refs. \cite{Ekert1991E91, Bennett1992EntBB84, Tomamichel_2012, masini2024, Tomamichel_2017,  QKDrev1, QKDrev2}]. 

The execution of such protocols is typically divided into parameter estimation rounds and key generation rounds. In the parameter estimation stage, a randomly chosen subset of the raw data is publicly disclosed and used to estimate the relevant channel parameters, such as the quantum bit error rate (QBER) or correlations needed for security proofs. This step provides a statistical guarantee against potential eavesdropping and determines whether the protocol should be aborted. The undisclosed portion of the data constitutes the key generation rounds, from which the raw key is distilled. Subsequent post-processing steps, namely error correction and privacy amplification, are then applied to transform this raw key into a secure final secret key. 

The above description assumes that the adversary’s information is limited to what can be obtained through interaction with the quantum channel and the publicly announced communication. In practice, however, one can never negate the existence of classical side channels. For instance, when a detector produces a classical bit, that outcome is encoded in electrical signals such as current spikes, voltage changes, timing signals, or electromagnetic emissions. If these physical signatures differ even slightly depending on the measurement result, an eavesdropper may infer partial information about the outcome without directly interacting with the quantum channel . In security terms, such leakage creates additional correlations between the key and the adversary, effectively increasing Eve’s information. 
To model this effect quantitatively, we consider that each measurement outcome is leaked to the adversary with a finite probability $\mathcal{L}$.

{\it{Convex combination attacks---}}
Consider now that there is an adversary, Eve, who wants to intrude on the protocol and guess the key established between the parties. There exist a variety of attacks that allow Eve to remain undetected during the parameter-estimation rounds while still being able to access the key (see, e.g., Refs.~\cite{QKDrev1,QKDrev2} for a review).
Here, we consider the simplest yet one of the most realistic adversarial attacks, known as convex combination (C.C.) attack. Originally introduced in \cite{Acin_20061, Acin_2006} for the device-independent setting, the C.C.
strategy adapted here to the device-dependent setup involves reproducing correlations between Alice and Bob by mixing correlations generated by separable and entangled states, that is, given a correlation $\{p(\vec{a}|\vec{x})\}$ it can be expressed as,
\begin{equation}
    p(\vec{a}|\vec{x})=q_{\mathcal{S}} p^{\mathcal{S}}(\vec{a}|\vec{x})+(1-q_{\mathcal{S}}) p^{\mathcal{E}}(\vec{a}|\vec{x})
\end{equation}
where $\{p^{\mathcal{S}}(\vec{a}|\vec{x})\}$ can be realised using a separable state and $\{p^{\mathcal{E}}(\vec{a}|\vec{x})\}$ can be realised using an entangled state with $0\leq q_{\mathcal{S}}\leq1$. 

We introduce a classical variable $e\in \{\mathcal{S},\mathcal{E}\}$ 
that tells Eve which part of the mixture is being distributed in the current round, the separable state or the entangled state.
Using this labeling, we can now quantify the maximum key extractable from the protocol in terms of Eve’s knowledge. This is captured by the intrinsic information $I(A_1,A_2,\ldots,A_N\downarrow e)$, \cite{intinf1, intinf2, intinf3} where


\begin{equation}\label{intinf11}
   I(A_1\ldots A_N\downarrow e)_{\rho}=\inf_{\rho'=(\I_{A_1\ldots A_N}\otimes \Lambda_E )\rho} I(A_1\ldots A_N| e)_{\rho'}.
\end{equation}
Here, $\Lambda_E$ is any operation that can be performed by Eve with $I(A_1\ldots A_N| e)_{\rho'}$ denoting the mutual information of all parties conditioned on Eve. If the shared state is separable, then Eve can perfectly predict the outcomes associated with the separable component, thereby precluding the establishment of any secure key among the parties. As a result, $I(A_1\ldots A_N\downarrow e)_{\rho^{\mathcal{S}}}=0$ for any separable state. Here, we do not minimize the conditional mutual information over all possible maps of Eve but only consider the specific C.C. attack by Eve. Consequently, we obtain the upper bound to the key, denoted as $I(A_1\ldots A_N|e,C.C.)$.  Moreover, as the conditional mutual information is convex over $e$, the key rate $R$ is upper-bounded by
\begin{equation}
  R\leq \frac{1}{N-1}\sum_e\sum_{\vec{x}}p_{\vec{x}}p(e|\vec{x})I_{\vec{x}}(A_1\ldots A_N|e,C.C.) \label{cc}
\end{equation}
where $I_{\vec{x}}(A_1\ldots A_N|e,C.C.)$ is the conditional mutual information of all parties given inputs $\vec{x}$. 
We now present the main results of this work.

{\it{Bipartite scenario---}}
For simplicity, we first consider the bipartite scenario, where two parties $A_1,A_2$ want to establish a key among themselves using an EB-QKD protocol. Suppose now that the source prepares a two-qubit isotropic state, 
\begin{equation}\label{Wer}
    \rho_{v}=v \proj{\phi^+}+(1-v)\I/4.
\end{equation}
It is well-known that this state is entangled for $v\geq1/3$ and separable otherwise \cite{Werner1987}. The correlations $\{p^{v}(a_1,a_2|x_1,x_2)\}$ are computed as $p^{v}(a_1,a_2|x_1,x_2)=\Tr(M_{a_1|x_1}\otimes M_{a_2|x_2}\rho_v)$. 

\begin{figure}[t!]
    \centering
    \includegraphics[width=\linewidth]{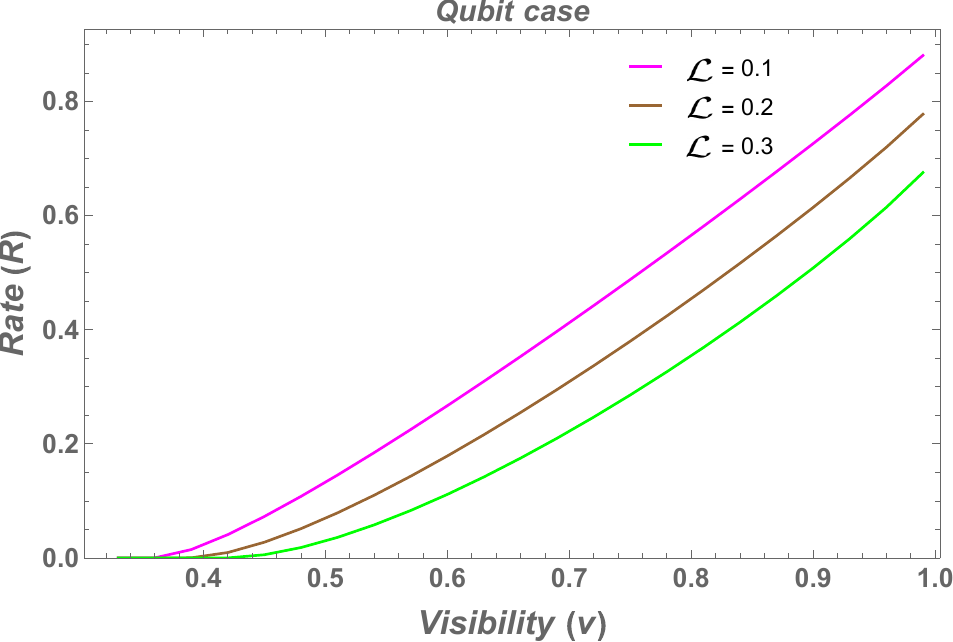}
    \caption{ Under the uniform leakage model, each of the three curves represents the upper bound on the secure key rate achievable in any protocol where both parties publicly announce their measurement inputs (when measuring the isotropic state), for leakage values $\mathcal{L}=0.1,0.2$ and $0.3$.
}
    \label{fig1}
\end{figure}

\begin{thm}\label{theo11}
    Consider the QKD protocol described above with a leakage paramter $\mathcal{L}$, which is implemented using the isotropic state $\rho_v$ \eqref{Wer}. Then, for any two-outcome measurements with all parties, Eve can find a C.C. attack such that no secure key can be extracted for (i) $ v\leq\frac{1}{3}+\frac{2\mathcal{L}}{3-2\mathcal{L}}$ for uniform leakage, (ii) $ v\leq\frac{1}{3}+\frac{2\mathcal{L}}{3(\mathcal{L}+3)}$ for leakage only from "junk".
\end{thm}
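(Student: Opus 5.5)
The plan is to build an explicit convex-combination attack in which Eve's label $e$ tags a separable component. The leaked information is what gives Eve extra room to enlarge that component. The starting point is the mixture
\begin{equation*}
\rho_v=\tfrac{v-1/3}{2/3}\proj{\phi^+}+\tfrac{1-v}{2/3}\rho_{1/3},
\end{equation*}
where $\rho_{1/3}$ is separable, so Eve already holds a separable weight $q_{\mathcal S}=\tfrac{3(1-v)}{2}$. The idea is that in a round where an outcome leaks, Eve learns $a_1$ (or $a_2$). Conditioned on that knowledge she can treat the correlations as if produced by a classically correlated (hence separable) model: outcome $a_1$ known to Eve, $a_2$ prepared from the conditional distribution $p(a_2|a_1,x_1,x_2)$. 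Those rounds then contribute zero to $I(A_1A_2|e,C.C.)$ in \eqref{cc}. It therefore suffices to show that the remaining non-leaked portion of the statistics, renormalised, is reproducible by an isotropic-type state with visibility $\le 1/3$, i.e.\ by a separable state. Then every term in \eqref{cc} vanishes and $R=0$.

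The concrete steps are as follows.

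(1) Fix arbitrary two-outcome projective measurements. General POVMs reduce to these by convexity, since the bound in \eqref{cc} is linear in the decomposition. The correlations of $\rho_v$ are $p^v(a_1,a_2|x_1,x_2)=v\,p^{\phi^+}+(1-v)p^{\I/4}$.

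(2) For uniform leakage, Eve sees each party's outcome with probability $\mathcal L$. The simplest bookkeeping is to attribute a fraction $\mathcal L$ of the $\phi^+$ weight to rounds where at least the relevant outcome leaks. In such rounds the $\phi^+$ statistics are perfectly correlated in the announced bases, and can be simulated by a local deterministic/classically correlated strategy known to Eve. The entangled weight that still needs to be hidden is then effectively $v(1-\mathcal L')$ for an effective $\mathcal L'$. I would pick $\mathcal L'$ so that the threshold reads $v_{\rm eff}\le 1/3$. Solving $v_{\rm eff}=1/3$ with the combinatorics of two parties each leaking at rate $\mathcal L$ should reproduce $v\le \frac13+\frac{2\mathcal L}{3-2\mathcal L}$. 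Rearranged, this is $(3-2\mathcal L)v\le 1+\mathcal L\cdot\tfrac{4}{3}\cdot\ldots$, and I will fix the exact weights by matching this expression.

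(3) For leakage only from junk rounds (mismatched inputs $x_1\neq x_2$), the same argument applies, but only the junk fraction of rounds gives Eve a separable simulation for free. That fraction is weighted by $\mathcal L$ and the relative frequency of junk rounds. It yields a smaller gain, $\frac{2\mathcal L}{3(\mathcal L+3)}$. Here the surprise is that junk leakage still helps. Eve can shift the separable/entangled split across inputs: the decomposition in the C.C. attack need not use the same $q_{\mathcal S}$ for every $\vec x$. Pushing more entangled weight into junk rounds, where it is leaked, leaves a separable-only decomposition on the key rounds.

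The main obstacle is step (2)/(3): making the decomposition input-consistent. The separable and entangled parts must each be genuine quantum realisations, with no-signalling marginals independent of the other party's input. At the same time the leaked rounds must be simulated classically. I would first try taking the entangled component to be $\proj{\phi^+}$ and the separable component to be $\rho_{1/3}$ mixed with a leak-conditioned classical model. I would then verify the marginal consistency by the $U\otimes U^*$ invariance of isotropic states. That invariance makes everything depend only on the overlap of the measurement directions, reducing the check to a one-parameter inequality whose extremum gives the stated thresholds.
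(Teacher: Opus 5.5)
\textbf{Gap in part (i): the core mechanism is missing.} Your reduction cannot succeed. Under uniform leakage the leak is independent of the source, so the non-leaked rounds carry exactly the statistics of $\rho_v$. They are therefore never reproducible by a state of visibility $\le 1/3$ when $v>1/3$. Concretely, after you discount the leaked fraction, the flag $e=?$ still carries weight $(1-q_v)(1-\mathcal L)$ of $\phi^+$ events. In aligned bases (e.g.\ both $Z$) these are perfectly correlated, so the bound stays positive. Feeding leaked $\rho_{1/3}$ rounds back into $?$ with an outcome-independent weight does not help either: it only produces an isotropic state whose visibility is still above $1/3$.

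What the paper does is an outcome-dependent relabelling. Once $\vec x$ is announced, Eve moves a fraction $\gamma_{\vec a,\vec x}\in[0,1]$ of the leaked separable events with outcome $\vec a$ into the class $e=?$. She chooses $\gamma$ so that $q_v\mathcal L\gamma_{\vec a,\vec x}p^{\mathrm{sep}}(\vec a|\vec x)+(1-q_v)(1-\mathcal L)p^{\mathrm{ent}}(\vec a|\vec x)$ is constant in $\vec a$. Then $p(\vec a|e=?,\vec x)$ is uniform and $I(A_1A_2|e=?)=0$. The flags $f_i$ (product distributions) and $e=\vec a$ (known outcomes) already contribute zero.

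The threshold is simply feasibility, $\gamma\le 1$, in the worst cell: $p^{\mathrm{ent}}(\vec a')=\tfrac12$, $p^{\mathrm{ent}}(\vec a)=0$, $p^{\mathrm{sep}}(\vec a)=\tfrac16$. This gives $3(1-\mathcal L)(1-q_v)\le \mathcal L q_v$ with $q_v=\tfrac32(1-v)$. Because $\gamma$ is classical post-processing, it may depend on $\vec x$ freely. The state-level split ($\rho_{1/3}$ plus $\phi^+$) stays fixed and input-independent, so your input-consistency worry and the $U\otimes U^*$ reduction are not needed.

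Your step (2) also fits weights to the target formula rather than deriving it. The feasibility condition above gives $v\le 1/(3-2\mathcal L)=\tfrac13+\tfrac{2\mathcal L}{3(3-2\mathcal L)}$, which is what the paper's argument actually delivers. Its final rearrangement drops a factor $1+d^{N-1}=3$ from the denominator of the second term. So no honest choice of weights within this attack reproduces $\tfrac13+\tfrac{2\mathcal L}{3-2\mathcal L}$.

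\textbf{Gap in part (ii): the model is misread.} In the paper, ``junk'' rounds are those in which Eve distributes the separable component (flag $f_i$), and only those rounds leak; they are not the mismatched-basis rounds. With the same relabelling, the $?$ class now contains the full unleaked weight $(1-q_v)p^{\mathrm{ent}}$. The feasibility condition becomes $3(1-q_v)\le\mathcal L q_v$, which gives exactly $v\le\frac{1+\mathcal L}{3+\mathcal L}=\frac13+\frac{2\mathcal L}{3(\mathcal L+3)}$.

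Under your reading, leakage confined to rounds with $x_1\neq x_2$ leaves the key-round terms of the rate bound untouched. There this attack gains nothing beyond $v\le 1/3$. Your proposed remedy, a separable weight $q_{\mathcal S}$ that depends on $\vec x$, is not available to Eve. The source emits $\rho_v$ with its flagged decomposition before the parties choose their inputs, so the state-level split cannot be correlated with $\vec x$. Only Eve's classical relabelling after the announcement can depend on the inputs.
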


We provide a brief sketch of the proof, with full details deferred to Theorem~\ref{theo1} in the appendix. The isotropic state $\rho_v$ can be expressed as a convex combination of the maximally entangled state and the separable isotropic state $\rho_{1/3}$. The latter can then be expressed as a convex mixture of product states. Accordingly, Eve controls the source and distributes these product states, together with the maximally entangled state to the parties. A flag is  attached to indicate which state has been sent. The local probability that outcomes leak is $\mathcal{L}$. Considering the worst-case scenario, that is, outcomes are leaked in the same rounds for both parties, Eve has complete knowledge of Alice’s and Bob’s measurement outcomes in those rounds. Consequently, she assigns her knowledge as $e=(a_1,a_2)$, corresponding to the outcomes at $A_1$ and $A_2$. In contrast, when the maximally entangled state is sent, Eve assigns $e=?$. 

We now consider two different leakage models: (i) uniform leakage, where the leakage of outcomes is independent of Eve’s flag. In this case, outcomes leak with probability $\mathcal{L}$ in every round, so the fraction of leaked outcomes equals $\mathcal{L}$. (ii) leakage only from ``junk'', where outcomes leak only when Eve sends the separable state, i.e., from rounds that cannot generate secure key anyway. In this case, the fraction of leaked outcomes is $3\mathcal{L}(1-v)/2$ for any $v$, which is smaller than in the uniform leakage model. We refer to these rounds as "junk", since whenever Eve sends the separable part of the state, no secure key is generated in those rounds.


Coming back to the proof, since no secure key can be generated from product distributions, Eve’s strategy to weaken the overall QKD protocol is to minimise the intrinsic information $I(A_1,A_2|e=?)$. To this end, she reassigns a fraction $\gamma_{a_1,a_2}$ of events, originally arising from the separable state and yielding outcomes $(a_1,a_2)$ to the class $e=?$. This strategy allows Eve to reduce the intrinsic information to zero, i.e., $I(A_1,A_2|e=?)=0$. This occurs for all $v \leq 1/3 + 2\mathcal{L}/(3-2\mathcal{L})$ in the case of uniform leakage, and for $v \leq 1/3 + 2\mathcal{L}/[3(\mathcal{L}+3)]$ when leakage occurs only from junk rounds.

We observe that for any non-zero leakage $\mathcal{L}$, there is a gap between entangled states and key extractable states. Remarkably, we also observe that this effect persists even if the leakage occurs from the part of the state that could not be used to generate key. Consequently, we can conclude that even if there is an arbitarily small leakage in standard QKD protocols, every entangled state can not be used to generate secure key. 

Using a similar strategy, we find the maximal value of the key that can be securely obtained from any two-qubit isotropic state. Unlike the usual approaches, we do not restrict ourselves to any particular protocol. Instead, we derive a general upper bound to the secure key rate that applies to all standard protocols where both parties announce their inputs [see Appendix B for details]. We now utilize this framework to analyze repeater-based QKD setups.


\begin{figure}[t!]
\includegraphics[width=\columnwidth]{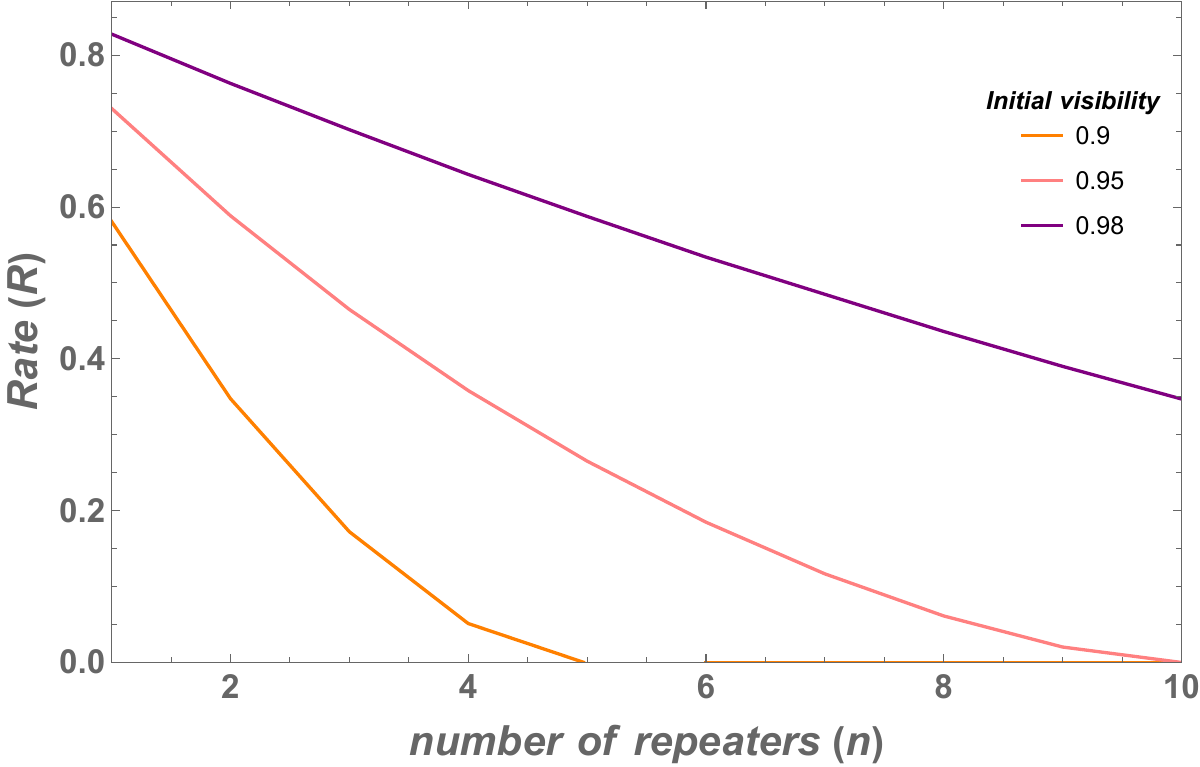}
\caption{ Considering the uniform leakage model, we illustrate the dependence of the secure key rate $R$
on the number of repeaters $n$ for various initial visibilities for a particular leakage $\mathcal{L}=0.1$, showing the rapid degradation of the rate with increasing number of repeaters.} \label{fig2}
\end{figure}

{\it{Repeater-based QKD---}} A quantum repeater enables QKD over very long distances by addressing a core limitation of quantum communication: photon loss and noise in optical fibers. Unlike classical signals, quantum states cannot be amplified due to the no-cloning theorem, which makes direct long-distance QKD nearly impossible beyond a few hundred kilometers. Quantum repeaters solve this problem by breaking a long channel into several shorter elementary links, creating entanglement over each short segment, and then extending that entanglement step by step using Bell-state measurements (BSMs). In this way, rather than trying to send a fragile quantum state across the full distance, the repeater constructs long-range entanglement in a modular fashion. Once the endpoints share the entanglement, they can run a standard EB-QKD protocol (e.g., E–91) to generate a secure key \cite{repeater, rep1,rep2,rep3,rep4}. 

The simplest configuration with a single repeater between Alice and Bob consists of two sources generating entangled states. The first source prepares the state $\rho_{A R_1}$, sending one subsystem to Alice and the other to the repeater, while the other source generates another entangled state $\rho_{R_1 B}$, sending one subsystem to the repeater and the other to Bob. 
In realistic channels, both sources distribute imperfect entangled states, typically modelled as isotropic states with visibility $v$ \eqref{Wer}. At this point, Alice and Bob are not entangled with each other; both are only connected to the repeater. 

To extend entanglement across the full distance, the repeater performs a Bell-state measurement (BSM) on the subsystems $R_1^{(1)}$ and $R_1^{(2)}$. The measurement swaps the entanglement from the measurement to Alice and Bob, up to a local Pauli correction that depends on the BSM outcome and can be communicated classically. Crucially, however, when the two initial links have visibility $v$, the shared state that results after swapping is another isotropic state but with reduced visibility $v^2$. 
Consequently, a chain of $n$ repeaters yields a final visibility $v^{2n}$ on the state shared by Alice and Bob. Consequently, we conclude from Theorem \ref{theo1} that one cannot obtain a secure key for $v^{2n}\leq1/3+2\mathcal{L}/(3-2\mathcal{L})$ (considering a uniform leakage model). Therefore, for a given visibility $v$, the maximum number of repeater nodes $n_{\max}$ compatible with secure key
\begin{eqnarray}
    n_{\max}\leq \frac{\log_2(1/3+2\mathcal{L}/(3-2\mathcal{L}))}{2\log_2v}.
\end{eqnarray}
As is clear from this expression, in an idealised error-free scenario, any standard protocol would, in principle, allow an arbitrarily large number of repeater nodes.

Strikingly, introducing even a modest imperfection in state preparation, on the order of only five percent (corresponding to $v=.95$) and $\mathcal{L}=.1$, drastically limits this scalability, restricting the generation of a secure key to at most ten repeaters. This highlights the extreme sensitivity of long-distance repeater-based QKD to small preparation errors. Note that in the device-independent setting, the maximal number of repeaters when restricting to the CHSH protocol was found in \cite{sadhu2024}.

In Fig. \ref{fig2}, we illustrate this behaviour, as the key rate decreases rapidly as the number of repeaters increases. As discussed earlier, after \(n\) repeaters an initial isotropic-state visibility \(v\) is reduced to an effective visibility \(v^{2n}\). For each value of \(v^{2n}\), the secure key rate is computed by 
optimising the C.C. strategy (capturing the worst-case eavesdropping scenario), and then subsequently maximizing over Alice’s and Bob’s measurement settings to determine the highest achievable secure key rate consistent with that attack. Repeating this procedure for initial visibilities \(v = 0.9, 0.95,\) and \(0.98\) with $\mathcal{L}=.1$ yields the curves displayed in Fig.~\ref{fig2}.


{\it{Generalisation to arbitrary dimension and number of parties---}}
We now consider the most general QKD scenario as discussed above, where arbitrary number of parties want to establish a key among themselves using arbitrary-dimensional systems. Recall that we consider protocols where the parties announce their inputs. To this end, suppose that the source prepares a $N$-qudit isotropic state 
\begin{equation}\label{Iso}
    \rho_{v}=v \proj{\phi_{d,N}}+(1-v)\frac{\I}{d^N}
\end{equation}
where $\ket{\phi_{d,N}}$ is the generalised GHZ state given by
\begin{eqnarray}
    \ket{\phi_{d,N}}=\frac{1}{\sqrt{d}}\sum_{i=0}^{d-1}\ket{i}^{\otimes N}.
\end{eqnarray}
It is well-known that the above state is fully separable for $v\leq1/(1+d^{N-1})$ and entangled across some bipartition otherwise \cite{Dur2000}. The correlations $\{p^{v}(\vec{a}|\vec{x})\}$ are computed as $p^{v}(\vec{a}|\vec{x})=\Tr(\bigotimes_{i=1}^NM_{a_i|x_i}\rho_v)$. 

\begin{figure}[t!]
    \centering
    \includegraphics[width=\linewidth]{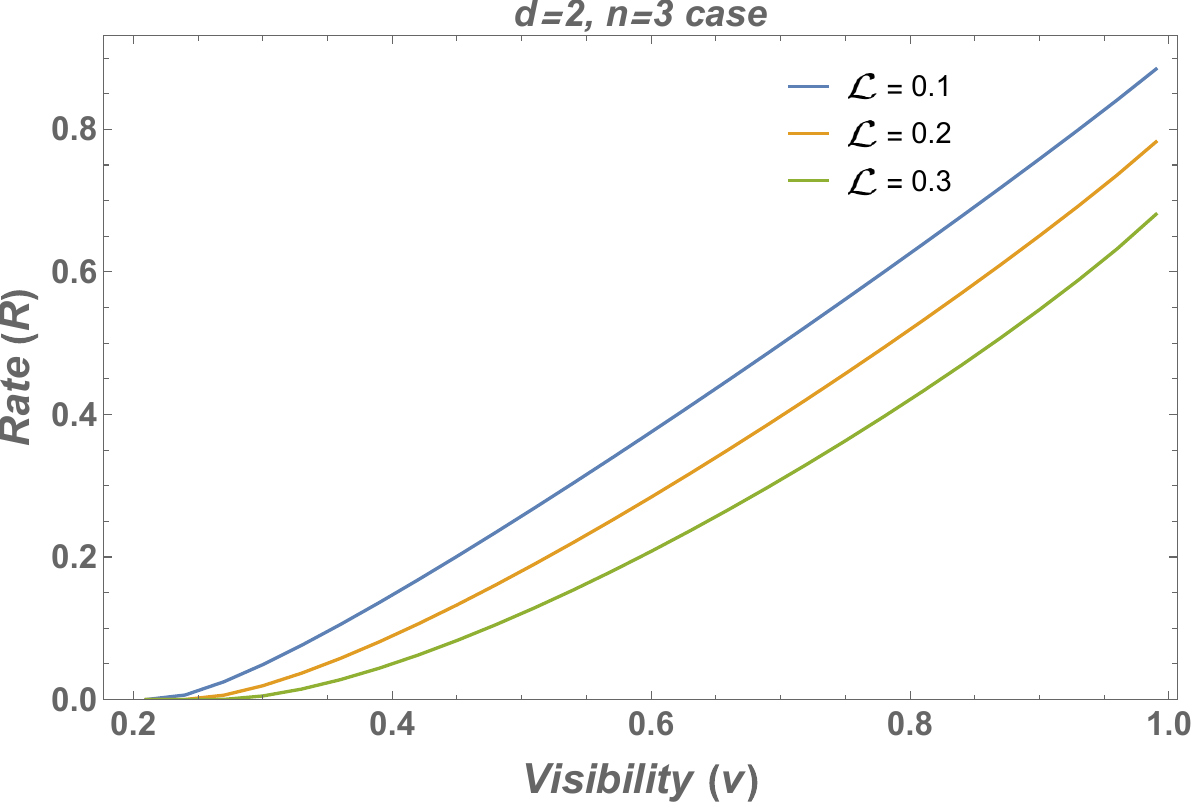}
    \caption{
Under the uniform leakage model, the three curves depict upper bounds on the secure key rate as a function of the visibility \(v\) of the isotropic state \(\rho_v\) for the  \(d = 2, N = 3\) scenario, with each curve corresponding to a specific leakage value.}

    \label{fig3}
\end{figure}

\begin{thm}
    Consider the QKD protocol described above with a leakage parameter $\mathcal{L}$, which is implemented using the isotropic state $\rho_v$ \eqref{Iso}. Then, for any $d-$outcome measurements with all parties, Eve can find a C.C. attack such that no secure key can be extracted for (i) $ v\leq\frac{1}{1+d^{N-1}}+\frac{d^{N-1}\mathcal{L}}{(1+d^{N-1})(1-\mathcal{L})+\mathcal{L}}$ for uniform leakage, (ii) $ v\leq\frac{1}{1+d^{N-1}}+\frac{d^{N-1}\mathcal{L}}{(1+d^{N-1})(\mathcal{L}+1+d^{N-1})}$ for leakage only from "junk".
\end{thm}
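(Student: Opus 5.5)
The plan is to lift the C.C. attack of Theorem~\ref{theo11} to $N$ qudits. First write $\rho_v = q\proj{\phi_{d,N}} + (1-q)\rho_{v_s}$ with $v_s = 1/(1+d^{N-1})$ the full-separability threshold, so that $q=(v-v_s)/(1-v_s)$. By \cite{Dur2000}, $\rho_{v_s}$ is fully separable, so $\rho_{v_s}=\sum_\lambda p_\lambda \bigotimes_i \sigma^{(i)}_\lambda$. Eve controls the source and sends either $\ket{\phi_{d,N}}$ (flag $e=?$) or a product state $\bigotimes_i\sigma^{(i)}_\lambda$. In the latter case, whenever the outcomes leak she learns $\vec a$; as in the bipartite case we take the worst case in which all parties leak in the same rounds, so $e=\vec a$ there.

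Next, for any announced inputs $\vec x$ and any $d$-outcome measurements, the distribution conditioned on $e=?$ is a mixture of the GHZ distribution $p^{\phi}(\vec a|\vec x)$ (weight $q$) and the non-leaked separable rounds. Eve now relabels a fraction $\gamma_{\vec a}$ of the leaked separable events with outcome $\vec a$ into the class $e=?$. The goal is to choose $\gamma_{\vec a}\ge 0$ so that the resulting distribution in class $e=?$ becomes a product distribution, for example fully uniform $1/d^N$, or more generally the product of its marginals. Then $I_{\vec x}(A_1\ldots A_N|e=?)=0$, and since the classes $e=\vec a$ are deterministic, \eqref{cc} gives $R=0$.

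The key estimate is as follows. The GHZ-part distribution satisfies $p^\phi(\vec a|\vec x)\le 1/d$, with the extremal case perfectly correlated. The white-noise part of $\rho_{v_s}$ contributes at least $(1-v_s)/d^N$ per outcome. Filling every entry up to the maximum of the unnormalised conditional distribution requires a deficit of at most about $q(1/d-\text{min entry})$ per outcome. We must show this is covered by the leaked mass, which is $\mathcal{L}$ times the separable distribution, itself bounded below by $(1-v)/d^N$ per outcome once the noise is counted. Working out the extremal (perfectly correlated, product-measurement) case, the condition becomes linear in $v$. Under uniform leakage, the leaked fraction $\mathcal{L}$ applies to all rounds, but only the separable ones can be relabelled, and the entangled rounds leak too. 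This gives $v\le v_s+\frac{d^{N-1}\mathcal{L}}{(1+d^{N-1})(1-\mathcal{L})+\mathcal{L}}$, which for $d=N=2$ reduces to Theorem~\ref{theo11}(i). For leakage only from junk, the leaked mass is $\mathcal{L}(1-q)$ and there is no leakage from the GHZ rounds, which yields (ii) with denominator $(1+d^{N-1})(\mathcal{L}+1+d^{N-1})$. Both formulas should be checked against the $N=d=2$ values as a sanity test.

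The main obstacle is the uniformity over all $d$-outcome measurements. We need a worst-case bound on how far $p^{\phi}(\cdot|\vec x)$ can be from uniform, and a matching lower bound on the separable leaked weight on every outcome, for arbitrary POVMs. I would first reduce to the observation that $\rho_v$ is a mixture of GHZ and white noise. The white-noise contribution gives every outcome probability $\prod_i \mathrm{Tr}M_{a_i|x_i}/d^N$, while the GHZ contribution to outcome $\vec a$ is bounded by $\min_i \mathrm{Tr}M_{a_i|x_i}/d$. I would then show that the ratio of the two is maximised by projective rank-one measurements in the computational basis, which reduces everything to the explicit diagonal case.
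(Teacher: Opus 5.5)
\textbf{The formula you assert for (i) does not follow from your own estimate, and (i) as written is false.} Your construction is the same as the paper's: GHZ plus the separable isotropic state at $v_s$, flags, simultaneous leakage, relabelling a fraction $\gamma_{\vec a,\vec x}$ of leaked separable events into $e=?$, and the bounds $p^\phi\le 1/d$ plus the white-noise floor. One bookkeeping point first: the $?$ class must contain only the non-leaked GHZ rounds (weight $q(1-\mathcal{L})$ under uniform leakage) plus relabelled events. Putting the non-leaked product rounds there, as your second paragraph suggests, adds correlated mass and worsens the threshold.

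Now carry out your estimate in the perfectly correlated case. The deficit $q(1-\mathcal{L})/d$ on an outcome with $p^\phi(\vec a|\vec x)=0$ must be covered by the leaked separable mass $\mathcal{L}(1-v)/d^N$. With $q=[(1+d^{N-1})v-1]/d^{N-1}$ this reads
\[
(1-\mathcal{L})\bigl[(1+d^{N-1})v-1\bigr]\le \mathcal{L}(1-v)\iff v\le \frac{1}{(1+d^{N-1})(1-\mathcal{L})+\mathcal{L}},
\]
and the right-hand side equals $\frac{1}{1+d^{N-1}}+\frac{d^{N-1}\mathcal{L}}{(1+d^{N-1})[(1+d^{N-1})(1-\mathcal{L})+\mathcal{L}]}$. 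So the excess over $v_s$ is smaller than the stated one by a factor $1+d^{N-1}$; for $d=N=2$ you get $1/(3-2\mathcal{L})$, not $\tfrac13+2\mathcal{L}/(3-2\mathcal{L})$.

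Finer bookkeeping within this attack cannot recover the stated value. For computational-basis measurements, any product distribution with mass at least $q(1-\mathcal{L})/d$ on $(0,\dots,0)$ and $(1,\dots,1)$ must put at least that much on $(0,1,\dots,1)$ or $(1,0,\dots,0)$, since $P(0\ldots0)P(1\ldots1)=P(0,1\ldots1)P(1,0\ldots0)$.

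The paper's proof makes the same slip in its last line: it mis-simplifies $1-\frac{d^{N-1}(1-\mathcal{L})}{(1+d^{N-1})(1-\mathcal{L})+\mathcal{L}}$. Your proposed sanity check against Theorem~\ref{theo11}(i) would therefore not catch it, since that theorem carries the same error. Statement (i) cannot be proved at all. For $d=N=2$ and $\mathcal{L}\ge 3/5$ the claimed threshold is at least $1$. Yet at $v=1$ the state is pure, admits only the trivial C.C. decomposition, and $Z$-basis measurements give key rate $H(A_1|E)-H(A_1|A_2)=1-\mathcal{L}>0$.

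For (ii) your method does give the stated bound. The condition $q/d\le\mathcal{L}(1-v)/d^N$ yields $v\le(1+\mathcal{L})/(1+d^{N-1}+\mathcal{L})$, which equals the stated expression.

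\textbf{Your planned reduction for general POVMs also fails.} The ratio of $\min_i\mathrm{Tr}M_{a_i|x_i}/d$ to $\prod_i\mathrm{Tr}M_{a_i|x_i}/d^N$ is not maximised by rank-one projectors; it diverges for small-trace effects. More importantly, the uniform target becomes unreachable. Take $d=N=2$ with $M_0=\epsilon\proj{0}$ and $M_1=\I-M_0$ on both sides. Then $p^\phi(1,1)=\tfrac12[1+(1-\epsilon)^2]\approx 1$, while the separable probability of $(0,0)$ is only $\epsilon^2/3$. So for any $v>1/3$, no $\gamma\le 1$ flattens the $?$ class once $\epsilon$ is small. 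You would need either a product target adapted to the marginals, or the paper's route. The paper writes each POVM as a mixture of extremal measurements and hands the mixing label to Eve. That step also needs justification, since for $d\ge 3$ extremal $d$-outcome POVMs need not be projective.
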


The proof of the above theorem is provided in Theorem \ref{theo1} in the appendix. For the special cases $d=3, N=2$ and $d=2, N=3$, we compute the upper bound on the secure key rate achievable with standard QKD protocols for an isotropic state (see Fig.~\ref{fig3}). The computation is based on finding the best CC attack by Eve, given arbitrary projective measurements performed by the parties [see Appendix B for details].

\begin{figure}[t!]
    \centering
    \includegraphics[width=\linewidth]{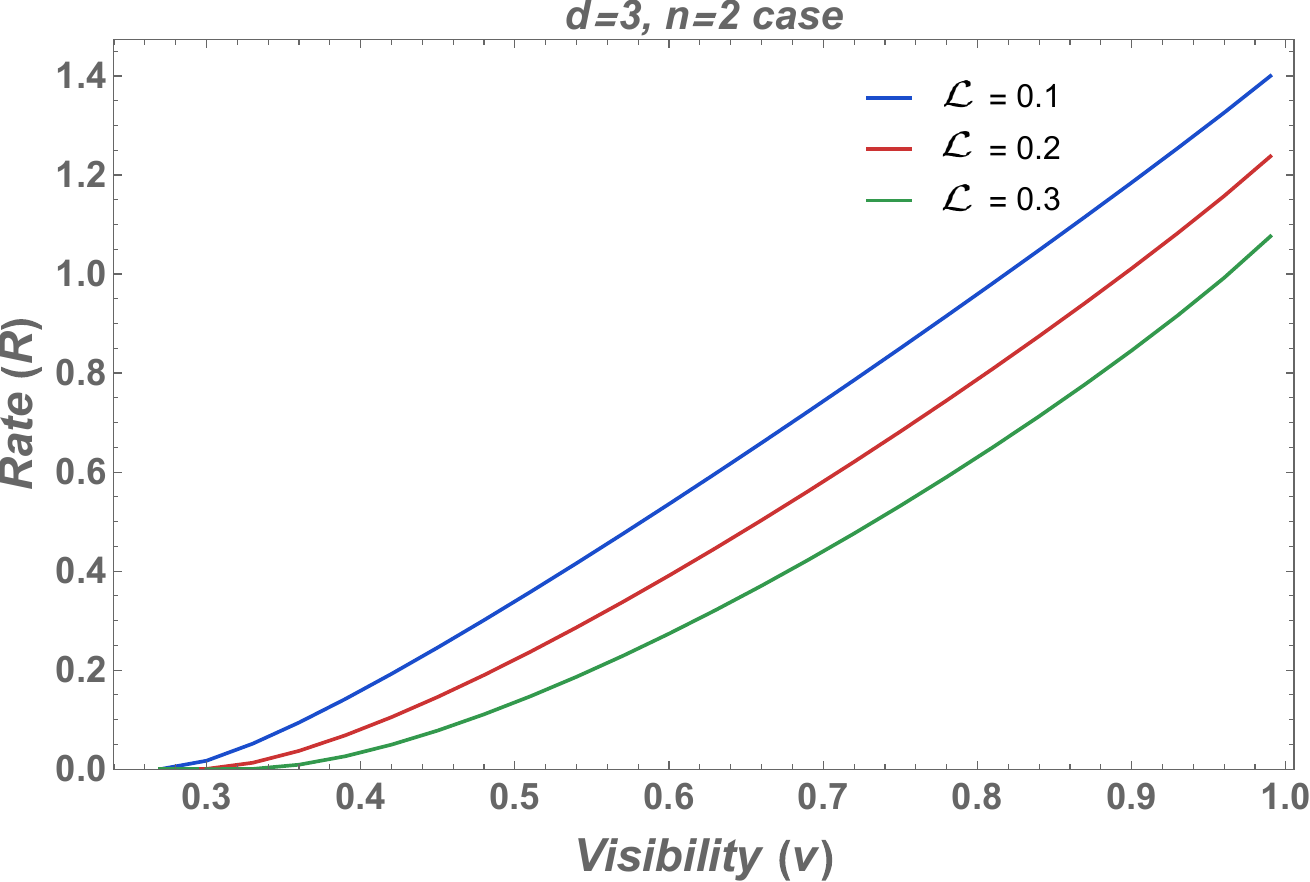}
    \caption{
   Considering the uniform leakage model, we illustrate the upper bounds on the secure key rate as a function of the visibility $v$ of the isotropic state $\rho_v$ for the  \(d = 3, N = 2\) scenario, shown three different leakage values. Here, the secret key rate can exceed one bit, since a three-dimensional quantum system can, in principle, generate a three-level key symbol per round, allowing up to \(\log_2 3\) bits of secret key.}

    \label{fig4}
\end{figure}


{\it{Discussions---}} Our findings highlight a fundamental limitation in the practical use of entanglement for QKD. While entanglement is widely regarded as the essential resource enabling quantum security, our results demonstrate that it is not, by itself, sufficient for secure key generation in standard entanglement-based practical QKD protocols. In particular, we have shown that when the input choices are publicly announced, as in the E91 and entanglement-based BB84 schemes, even an arbitrarily small amount of leakage of classical side information implies that there exist entangled states that fail to yield any secret key. Counter-intuitively, we find that this gap persists even if the leakage happens only from "junk" rounds. This means that even if leakage occurs only in rounds from which no secure key can be generated, it is still sufficient to reveal a gap between entangled states and key-generating states.

To do this, we identified a class of two-qubit isotropic states that are entangled yet useless for key distillation under most protocols, thereby underscoring the practical distinction between entanglement and key-generating capability. Extending this observation, we showed that a similar limitation arises in higher-dimensional and multipartite systems, where certain isotropic states are entangled but cannot be used to extract a secure key. These results suggest that the mere presence of entanglement does not guarantee practical advantage for cryptographic tasks. They further point towards the need to identify operational properties beyond entanglement that determine whether a state can support secure key generation in realistic settings.
Finally, we demonstrated that even a small noise in the source drastically reduces the scalability of repeater-based QKD architectures. 


Several questions arise from our findings. A key direction is to develop a general criterion distinguishing entangled states that can yield a secret key from those that cannot under realistic conditions. Extending these ideas to other settings, such as measurement-device-independent or prepare-and-measure protocols, is important for understanding the fundamental resources enabling secure key generation. In particular, applying the C.C. attack in prepare-and-measure scenarios would be interesting, since the source is held by one of the parties and cannot be accessed or manipulated by the adversary, which changes the structure of possible attacks. In this work, we considered one of the simplest possible attacks; however, under more sophisticated attacks, an even larger set of entangled states could be shown to be useless for secure key generation.


{\textit{Acknowledgements---}}
This project was funded within the National Science Centre, Poland, grant Opus 25, UMO-2023/49/B/ST2/02468.

%

\onecolumngrid
\appendix
\section{Appendix A: Proof of the main theorem}
\setcounter{thm}{0}
\begin{thm}\label{theo1}
    Consider the QKD protocol described above with a leakage parameter $\mathcal{L}$, which is implemented using the isotropic state $\rho_v$ \eqref{Iso}. Then, for any $d-$outcome measurements with all parties, Eve can find a C.C. attack such that no secure key can be extracted for (i) $ v\leq\frac{1}{1+d^{N-1}}+\frac{d^{N-1}\mathcal{L}}{(1+d^{N-1})(1-\mathcal{L})+\mathcal{L}}$ for uniform leakage, (ii) $ v\leq\frac{1}{1+d^{N-1}}+\frac{d^{N-1}\mathcal{L}}{(1+d^{N-1})(\mathcal{L}+1+d^{N-1})}$ for leakage only from "junk".
\end{thm}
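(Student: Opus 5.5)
The approach is to construct the convex-combination attack explicitly and show that, below the stated thresholds, Eve can make the ``$e=?$'' conditional distribution a product distribution for every choice of inputs $\vec{x}$. Then every term in \eqref{cc} vanishes.

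\textbf{Decomposition.} Let $v_s=1/(1+d^{N-1})$, the fully separable threshold of \eqref{Iso}, and write
\[
\rho_v=w\proj{\phi_{d,N}}+(1-w)\rho_{v_s},\qquad w=\frac{v-v_s}{1-v_s}.
\]
By \cite{Dur2000}, $\rho_{v_s}$ is a convex mixture of fully product states. Eve prepares either $\ket{\phi_{d,N}}$ (flag $e=?$) or a product state from this decomposition, whose label she keeps. In unleaked product rounds Eve knows the product state, so the conditional mutual information is zero. In leaked rounds (worst case: all parties leak simultaneously, probability $\mathcal{L}$) Eve knows $\vec{a}$ exactly. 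So the only possibly nonzero contribution to \eqref{cc} is $I_{\vec{x}}(A_1\ldots A_N|e=?)$.

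\textbf{Relabelling.} Eve moves a fraction $\gamma_{\vec{a}}$ of the leaked separable events with outcome $\vec{a}$ into the class $e=?$. She chooses these fractions so that, for fixed $\vec{x}$, the unnormalised distribution
\[
w'\,p^{\mathcal{E}}(\vec{a}|\vec{x})+t_{\vec{a}},\qquad 0\le t_{\vec{a}}\le \mathcal{L}(1-w)p^{\mathcal{S}}(\vec{a}|\vec{x}),
\]
is proportional to the uniform distribution $d^{-N}$, which is a product distribution. Here $w'=w$ for leakage only from junk. For uniform leakage $w'=w(1-\mathcal{L})$, because the leaked entangled rounds are also fully known to Eve and drop out of $e=?$; the available leaked separable weight changes correspondingly.

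This is a linear feasibility problem in the $t_{\vec{a}}$, and I will bound its worst case uniformly over $d$-outcome measurements using two facts.

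\emph{Upper bound on the entangled part.} For $\ket{\phi_{d,N}}$, any local $d$-outcome POVMs give $p^{\mathcal{E}}(\vec{a}|\vec{x})\le 1/d$. The reason is that each single-party marginal is uniform, so the joint probability is at most any marginal probability.

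\emph{Lower bound on the separable part.} Since $\rho_{v_s}$ contains white noise, $p^{\mathcal{S}}(\vec{a}|\vec{x})\ge (1-v_s)d^{-N}$. (For non-rank-one POVMs I would first refine to rank-one elements or track traces; I expect this to be harmless.)

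The hardest entry is one where $p^{\mathcal{E}}$ attains $1/d$ and the others must be filled up from zero. Matching these extremes yields a linear inequality in $w$ and $\mathcal{L}$. Solving it for $v$ should reproduce the stated thresholds, with $d^{N-1}\mathcal{L}$ in the numerator and the denominators $(1+d^{N-1})(1-\mathcal{L})+\mathcal{L}$ for uniform leakage and $(1+d^{N-1})(\mathcal{L}+1+d^{N-1})$ for junk leakage.

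\textbf{Main obstacle.} The delicate step is the bookkeeping of which rounds are leaked and how the normalisation of the $e=?$ class interacts with the two leakage models, so that the constants come out exactly as claimed. In the junk model, the leaked fraction is $\mathcal{L}$ times the separable weight. In the uniform model, it is $\mathcal{L}$ times the total weight, and the leaked entangled rounds shrink $w'$.

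\textbf{Checks.} I would first verify the scheme for $d=N=2$ against Theorem~\ref{theo11}, including whether simultaneous leakage and the choice of target distribution, uniform versus a rescaled $p^v$ itself, need adjusting. I would then generalise. Finally, I would confirm that $I=0$ on $e=?$, together with convexity in $e$, gives $R\le 0$ via \eqref{cc}.
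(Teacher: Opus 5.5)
\textbf{There is a gap in threshold (i).} Your construction is the paper's proof in different notation. Your $w$ is its $1-q_v$, and your $t_{\vec a}$ are its $q_v\mathcal{L}\gamma_{\vec a,\vec x}p^{\mathrm{sep}}(\vec a|\vec x)$. Your uniform target is its ``all conditional probabilities equal'' criterion. Your two extremal bounds give the sufficient condition $w'(1+D)\le\mathcal{L}(1-w)$, with $D=d^{N-1}$, which is exactly the paper's condition on $q_v$.

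The gap is the step you did not carry out: that solving this for $v$ ``should reproduce the stated thresholds''. It does for (ii), but not for (i). Set $X=(1+D)(1-\mathcal{L})+\mathcal{L}$. The uniform case $w(1-\mathcal{L})(1+D)\le\mathcal{L}(1-w)$ gives $w\le\mathcal{L}/X$, hence
\[
v=\frac{1}{1+D}+\frac{D}{1+D}\,w\le\frac{1}{X}=\frac{1}{1+D}+\frac{D\mathcal{L}}{(1+D)X}.
\]
Statement (i) instead asserts $\frac{1}{1+D}+\frac{D\mathcal{L}}{X}$, which is larger by $\frac{D^2\mathcal{L}}{(1+D)X}$. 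Note that (ii) keeps the factor $1+D$ in its denominator, while (i) has lost it.

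The paper reaches (i) only through its final identity $1-\frac{D(1-\mathcal{L})}{X}=\frac{1}{1+D}+\frac{D\mathcal{L}}{X}$, which is false. For $D=2$, $\mathcal{L}=0.1$ the two sides are $0.357$ and $0.405$. Your planned $d=N=2$ check will not expose this, because Theorem~\ref{theo11}(i) contains the same slip. The argument actually gives $v\le 1/(3-2\mathcal{L})=\frac13+\frac{2\mathcal{L}}{3(3-2\mathcal{L})}$.

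Adjusting the bookkeeping or the target will not help either. Take $N=2$ and computational-basis measurements. A product table on $e=?$ has all $2\times2$ minors zero. Its diagonal entries are at least $w'/d$, and its off-diagonal entries are at most $\mathcal{L}(1-w)/((1+d)d)$. So $w'(1+d)\le\mathcal{L}(1-w)$ is also necessary for this relabelling attack. What you (and the paper) actually prove is (ii) together with the weaker uniform-leakage threshold $v\le 1/X$. Reaching (i) as written would need a genuinely stronger attack.

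\textbf{A smaller point on general POVMs.} Your justification of $p^{\mathcal{E}}\le 1/d$ via uniform marginals assumes $\mathrm{Tr}M_{a|x}=1$. This fails for general $d$-outcome POVMs, e.g.\ $\{\I,0,\dots,0\}$, and then the uniform target is unreachable. The trace-tracking repair you mention does work if you aim instead for the product distribution $\propto\tau_{\vec a}=\prod_i\mathrm{Tr}M_{a_i|x_i}$. Two bounds hold:
\begin{itemize}
\item $p^{\mathcal{E}}(\vec a|\vec x)\le\tau_{\vec a}/d$, from $|\langle j|M|k\rangle|\le\sqrt{\langle j|M|j\rangle\langle k|M|k\rangle}$ and Cauchy--Schwarz;
\item $p^{\mathcal{S}}(\vec a|\vec x)\ge\tau_{\vec a}/((1+D)d)$.
\end{itemize}
Together they give the same condition for arbitrary POVMs. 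This is cleaner than the paper's reduction to extremal POVMs. That reduction asserts that extremal $d$-outcome POVMs are projective, which is false for $d\ge 3$. Like your marginal argument, it also overlooks degenerate projectors.
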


\begin{proof}

Let us construct the convex combination attack. For this purpose, the state generated by the source is mixed with a fully separable and entangled state as
\begin{equation}
\rho_{v}=\frac{q_v}{(1+d^{N-1})}\left(\proj{\phi_{d,N}}+\frac{1}{d}\I\right)+(1-q_v)\proj{\phi_{d,N}}
\end{equation}
which implies that $q_v=\frac{(1+d^{N-1})(1-v)}{d^{N-1}}$. It is thus straightforward to observe that the statistics with all parties remain invariant under this mixing. The fully separable isotropic state can further be represented by a convex combination of product states as
\begin{eqnarray}
\rho^{\mathrm{sep}}=\frac{1}{(1+d^{N-1})}\left(\proj{\phi_{d,N}}+\frac{1}{d}\I\right)=\sum_{i}p_i\bigotimes_{j=1}^N \proj{\mu^{(j)}_i}
\end{eqnarray}
where $\ket{\mu^{(j)}_i}\in \mathbb{C}^d$. Consequently, Eve sends each product state with a flag $f=f_{i}$ and the entangled state with a flag $f=?$ thus the joint state is
\begin{eqnarray}
    \rho^v_{A_1\ldots A_NE}=q_v\left(\sum_{i}p_i\bigotimes_{j=1}^N \proj{\mu^{(j)}_i}\otimes\proj{f_{i}}_E\right)+(1-q_v)\proj{\phi^+}\otimes\proj{?}_E.
\end{eqnarray}

Now, in any practical scenario, there is always some leakage of side information to the adversary. Here, we suppose that for any party, the outcomes in the key generating rounds are leaked with a probability $\mathcal{L}$.
Let us first consider the uniform leakage model, that is, leakage happens independently of the flag with the same probability $\mathcal{L}$.
Equivalently, one can understand $\mathcal{L}$ as the fraction of the number of outcomes leaked out versus the total number of outcomes for each party in the key generation rounds. In the worst-case scenario, Eve learns the outcomes of all parties in the same round, and thus, $\mathcal{L}$ fraction of the time, the outcomes of all parties are known to Eve when any state is sent. Consequently, the joint probability distribution of all parties along with Eve is given by
\begin{eqnarray}\label{probafl}
    p^v(\vec{a},e|\vec{x})&=&q_v(1-\mathcal{L})\left(\sum_{i}p_ip^{i}(\vec{a}|\vec{x})\delta_{e,f_{i}}\right)+q_v\mathcal{L}p^{\mathrm{sep}}(\vec{a}|\vec{x})\delta_{e,(\vec{a})}\nonumber\\&+&(1-q_v)(1-\mathcal{L})p^{\mathrm{ent}}(\vec{a}|\vec{x})\delta_{e,?}+(1-q_v)\mathcal{L}p^{\mathrm{ent}}(\vec{a}|\vec{x})\delta_{e,(\vec{a})}.
\end{eqnarray}
Here $p^{i}(\vec{a}|\vec{x})$ are product distributions as they are generated using the states $\bigotimes_{j=1}^N \proj{\mu^{(j)}_i}$ and $p^{\mathrm{ent}}(\vec{a}|\vec{x}),p^{\mathrm{sep}}(\vec{a}|\vec{x})$ is the probability distribution when the underlying state is maximally entangled and the separable isotropic state respectively.

Let us introduce a mixing parameter $\gamma_{\vec{a},\vec{x}}$, based on which the outcomes from the leaked part of $p^{\mathrm{sep}}(\vec{a}|\vec{x})$ are mixed with $?$. After mixing, the new probability distribution of all parties and Eve is given by
\begin{eqnarray}\label{eqA12}
 p(\vec{a},e|\vec{x})=q_v(1-\mathcal{L})\left(\sum_{i}p_ip^{i}(\vec{a}|\vec{x})\delta_{e,f_{i}}\right)+q_v\mathcal{L}(1-\gamma_{\vec{a},\vec{x}})p^{\mathrm{sep}}(\vec{a}|\vec{x})\delta_{e,\vec{a}}+(1-q_v)\mathcal{L}p^{\mathrm{ent}}(\vec{a}|\vec{x})\delta_{e,\vec{a}}\nonumber\\+q_v\mathcal{L}\gamma_{\vec{a},\vec{x}}p^{\mathrm{sep}}(\vec{a}|\vec{x})\delta_{e,?}+(1-q_v)(1-\mathcal{L})p^{\mathrm{ent}}(\vec{a}|\vec{x})\delta_{e,?}.\qquad
\end{eqnarray}

As described in the manuscript, the key rate is upper-bounded as
\begin{equation}
  R\leq \frac{1}{N-1}\sum_e\sum_{\vec{x}}p_{\vec{x}}p(e|\vec{x})I_{\vec{x}}(A_1\ldots A_N|e,C.C.) 
\end{equation}
Let us observe that $I_{\vec{x}}(A_1\ldots A_N|e=f_i,C.C.)=0$ for all $i$, as $p^{i}(\vec{a}|\vec{x})$ are product distributions along with  $I_{\vec{x}}(A_1\ldots A_N|e=\vec{a},C.C.)=0$ as the outcomes are known to Eve. Consequently, we can immediately conclude that key rate is upper bounded as
\begin{eqnarray}
   R\leq \frac{1}{N-1} \sum_{\vec{x}}p_{\vec{x}}p(e=?|\vec{x})I_{\vec{x}}(A_1\ldots A_N|e=?,C.C.).
\end{eqnarray}
We now find conditions when $I_{\vec{x}}(A_1\ldots A_N|e=?,C.C.)=0$, that is, we find $q_v$ for which there is a well-defined mixing parameter $\gamma_{\vec{a},\vec{x}}$ such that $0\leq\gamma_{\vec{a},\vec{x}}\leq1$ for any $\vec{a},\vec{x}$. For this purpose, we first observe that for any $\vec{x}$ we can always find a set of outputs $\vec{a'}$ such that $p^{\mathrm{ent}}(\vec{a'}|\vec{x})=\max_{\vec{a}}\{p^{\mathrm{ent}}(\vec{a}|\vec{x})\}$. Evaluating the conditional probability distribution, $p^{v}(\vec{a}|e=?,\vec{x})$ using the Bayes rule $p^{v}(\vec{a}|e=?,\vec{x})=p(\vec{a},e=?|\vec{x})/p(e=?|\vec{x})$ to obtain that
\begin{eqnarray}\label{eq15}
    p^{v}(\vec{a}|e=?,\vec{x})=\frac{1}{p(e=?|\vec{x})}\left(q_v\mathcal{L}\gamma_{\vec{a},\vec{x}}p^{\mathrm{sep}}(\vec{a}|\vec{x})+(1-q_v)(1-\mathcal{L})p^{\mathrm{ent}}(\vec{a}|\vec{x}) \right).
\end{eqnarray}
The simplest criteria for $I_{\vec{x}}(A_1,\ldots,A_N|e=?)=0$ is that all the conditional probabilities $ p^{v}(\vec{a}|e=?,\vec{x})$ are equal to each other, that is, the mixing parameters are adjusted such that all the conditional probabilities in Eq. \eqref{eq15} for each $\vec{a},\vec{x}$ are equal. Consequently, we get the condition for all $\vec{a},\vec{x}$
\begin{eqnarray}\label{eq17}
    (1 -  q_v)(1-\mathcal{L}) p^{\mathrm{ent}}(\vec{a'}|\vec{x}) = q_v\mathcal{L}\gamma_{\vec{a},\vec{x}}p^{\mathrm{sep}}(\vec{a}|\vec{x})+(1-q_v)(1-\mathcal{L})p^{\mathrm{ent}}(\vec{a}|\vec{x})\nonumber\\
    \implies \gamma_{\vec{a},\vec{x}}=\frac{(1-\mathcal{L})(1-q_v)(p^{\mathrm{ent}}(\vec{a'}|\vec{x})-p^{\mathrm{ent}}(\vec{a}|\vec{x}))}{q_v\mathcal{L}p^{\mathrm{sep}}(\vec{a}|\vec{x}) }.
\end{eqnarray}
The problem now simplifies to finding a $\gamma_{\vec{a},\vec{x}} \in [0,1]$ . If there exists such a $\gamma_{\vec{a},\vec{x}}$, then Eve can use the above described strategy to guess the key. Since it is clear from the above expression that $\gamma_{\vec{a},\vec{x}}\geq0$, we need to only establish conditions under which $\gamma_{\vec{a},\vec{x}}\leq1$.

Let us now recall that for $d-$outcome projective measurements, we have that $p^{\mathrm{ent}}(\vec{a'}|\vec{x})\leq 1/d$ and  $p^{\mathrm{sep}}(\vec{a'}|\vec{x})=\frac{1}{1+d^{N-1}}(p^{\mathrm{ent}}(\vec{a}|\vec{x})+1/d)$. Thus, $\gamma_{\vec{a},\vec{x}}$from \eqref{eq17} is upper-bounded as follows
\begin{eqnarray}
    \gamma_{\vec{a},\vec{x}}=\frac{(1+d^{N-1})(1-\mathcal{L})(1-q_v)(p^{\mathrm{ent}}(\vec{a'}|\vec{x})-p^{\mathrm{ent}}(\vec{a}|\vec{x}))}{q_v\mathcal{L}(p^{\mathrm{ent}}(\vec{a}|\vec{x})+1/d) }\leq \frac{(1+d^{N-1})(1-\mathcal{L})(1-q_v)(1/d)}{q_v\mathcal{L}/d }.
\end{eqnarray}
As the above quantity needs to be less than or equal to $1$, we find that
\begin{eqnarray}\label{eqA18}
     q_v\geq\frac{(1+d^{N-1})(1-\mathcal{L})}{(1+d^{N-1})(1-\mathcal{L})+\mathcal{L}}
\end{eqnarray}
which on recalling that $q_v=\frac{(1+d^{N-1})(1-v)}{d^{N-1}}$ gives us
\begin{eqnarray}
    v\leq 1-\frac{d^{N-1}(1-\mathcal{L})}{(1+d^{N-1})(1-\mathcal{L})+\mathcal{L}}=\frac{1}{1+d^{N-1}}+\frac{d^{N-1}\mathcal{L}}{(1+d^{N-1})(1-\mathcal{L})+\mathcal{L}}.
\end{eqnarray}
Consequently, for any $\mathcal{L}\geq 0$, Eve can obtain a mixing parameter $\gamma_{\vec{a},\vec{x}}$ such that the intrinsic information is $0$ even from an entangled state and thus no secure key can be extracted.

Let us consider now that the leakage only happens from "junk" rounds, that is, from rounds when separable state is sent by Eve. Thus, the joint probability distribution of all parties and Eve after the leakage is almost the same as \eqref{probafl}, but since there is no leakage when the flag is $?$, we get
\begin{eqnarray}
    p^v(\vec{a},e|\vec{x})&=&q_v(1-\mathcal{L})\left(\sum_{i}p_ip^{i}(\vec{a}|\vec{x})\delta_{e,f_{i}}\right)+q_v\mathcal{L}p^{\mathrm{sep}}(\vec{a}|\vec{x})\delta_{e,(\vec{a})}+(1-q_v)p^{\mathrm{ent}}(\vec{a}|\vec{x})\delta_{e,?}.
\end{eqnarray}
Consequently, the fraction of outcomes leaked is $\mathcal{L}q_v$, which is lower than the former case. Following the same steps as above from Eq. \eqref{eqA12} to \eqref{eqA18}, we obtain
\begin{eqnarray}
     q_v\geq\frac{1+d^{N-1}}{1+d^{N-1}+\mathcal{L}}
\end{eqnarray}
which on recalling that $q_v=\frac{(1+d^{N-1})(1-v)}{d^{N-1}}$ gives us
\begin{eqnarray}
    v\leq \frac{1}{1+d^{N-1}}+\frac{d^{N-1}\mathcal{L}}{(1+d^{N-1})(\mathcal{L}+1+d^{N-1})}.
\end{eqnarray}

Consider now general $d-$outcome measurements acting on $d-$dimensional Hilbert space. As shown in \cite{D_Ariano_2005}, any POVM can be written as a convex combination of extremal POVM's. Some examples of such extremal POVMs in arbitrary dimensions and number of outcomes can be found in \cite{Sarkar_2023}. For $d-$outcome measurements, the extremal POVMs are projective measurements. Thus, any $d-$outcome POVM $\{N_a\}$ can be expressed as $\sum_ip_i\{M_a^i\}$ where $M_a^i$ are projectors. As intrinsic information \eqref{intinf11} with general measurements can be expressed as a convex combination of intrinsic information via projective measurements if Eve knows the mixing labels $i$, that is,
\begin{eqnarray}
    I_{\vec{x}}(A_1\ldots A_N\downarrow e)_{\{N_a\}}\leq \sum_i p_i I_{\vec{x}}(A_1\ldots A_N\downarrow e)_{\{M_{a}^i\}}.
\end{eqnarray}
Now, Eve can straightaway know the labels $i$ as she can attach a flag on the states which activate particular label at the parties.
Consequently, we can straightaway conclude that if the secure key obtainable using $d-$outcome projective measurements is $0$, then $d-$outcome general measurements do not help. Moreover, one can also understand this by noting that, with POVMs, the correlations between Alice and Bob decrease, and thus the optimal strategy for finding the key is via projective measurements.
This completes the proof.

\section{Appendix B: Maximum obtainable rate for any isotropic state using projective measurements}

Let us recall from Eq. \eqref{cc} of the manuscript that the maximal achievable rate when Eve performs a C.C. attack is given by,
\begin{equation}
  R\leq \frac{1}{N-1}\sum_e\sum_{\vec{x}}p_{\vec{x}}p(e|\vec{x})I_{\vec{x}}(A_1\ldots A_N|e,C.C.) \label{upb_gen}  
\end{equation}
where $I_{\vec{x}}(A_1\ldots A_N|e,C.C.)$ is the conditional mutual information of all parties given inputs $\vec{x}$.

Eve's strategy to execute the C.C. attack is explicitly given in the proof of theorem \ref{theo1}. Using this as stated above in Eq. \eqref{upb_gen}, the upper bound to the key rate is given by
\begin{eqnarray}
    R\leq \frac{1}{N-1}\sum_{\vec{x}}p_{\vec{x}}p(e=?|\vec{x})I_{\vec{x}}(A_1,A_2,\ldots,A_N|e=?)\leq \frac{1}{N-1}\max_{\vec{x}}I_{\vec{x}}(A_1,A_2,\ldots,A_N|e=?).
\end{eqnarray}

In Fig. 1 of the manuscript, we find the maximal rate for the case when $d=2, N=2$. To do this, we first express the conditional mutual information $I_{\vec{x}}(A_1,A_2|e=?)$ as
\begin{eqnarray}
    I_{\vec{x}}(A_1,A_2|e=?)=H(A_1|e=?)+H(A_2|e=?)-H(A_1,A_2|e=?)
\end{eqnarray}
where $H(.)$ is the Shannon entropy. The Shannon entropy can be expressed in terms of the conditional joint probability distribution $p^v(\vec{a}|e=?,\vec{x})$ as
\begin{eqnarray}\label{I_d=2}
    I_{x_1,x_2}^v(A_1,A_2|e=?)=\sum_{a_1,a_2=0,1}p^v(e=?)p^v(a_1,a_2|e=?,x_1,x_2)\log \frac{p^v(a_1,a_2|e=?,x_1,x_2)}{p^v(a_1|e=?,x_1)p^v(a_2|e=?,x_2)}
\end{eqnarray}
where the superscript $v$ signifies that the quantities are computed for isotropic state \eqref{Iso} for $d=2, N=2$ with a visibility $v$.
Referring back to the proof of theorem \ref{theo1}, we observe that Eve's best strategy relies on finding the mixing parameters $\gamma_{\vec{a},\vec{x}}$ such that the conditional mutual information is minimised. Consequently, expressing the conditional probability distribution $p^v(a_1,a_2|e=?,x_1,x_2)$ as done in Eq. \eqref{eq15} and then using the fact that $A_1,A_2$ correlations are no-signalling, we obtain from \eqref{I_d=2} that $I_{x_1,x_2}^v(A_1,A_2|e=?)$ is a function of the mixing parameters $\gamma_{\vec{a},\vec{x}}$ such that $0\leq\gamma_{\vec{a},\vec{x}}\leq1$. Thus, the maximal rate for any visibility $v$ is given by
\begin{eqnarray}
R\leq \max_{\vec{x}}\min_{\gamma_{\vec{a},\vec{x}}}I^v_{\vec{x}}(A_1,A_2|e=?)(\gamma_{\vec{a},\vec{x}}).
\end{eqnarray}

One can straightaway generalise the above statement for any $d,N$ for any isotropic state of visibility $v$ to obtain
\begin{eqnarray}
R\leq \frac{1}{N-1}\max_{\vec{x}}\min_{\gamma_{\vec{a},\vec{x}}}I^v_{\vec{x}}(A_1,\ldots,A_N|e=?)(\gamma_{\vec{a},\vec{x}}).
\end{eqnarray}
Notice that for higher $N$ one has to use the chain rule to obtain  $I^v_{\vec{x}}(A_1,\ldots,A_N|e=?)$ from \eqref{I_d=2} which is technically involving.

Below, we describe the numerical procedure used to obtain the key-rate plots shown in Figs.~1, 3, and 4. 
All figures are computed under the uniform leakage model. Each figure contains three curves, 
corresponding to different leakage values, with each curve obtained using the same numerical optimization 
procedure described below.
We consider three scenarios, namely (i) $d=2, N=2$, (ii) $d=2, N=3$, and (iii) $d=3, N=2$. In all cases, the secret key rate is evaluated as a function of the visibility using the same two-step max--min optimization strategy: we first minimize over Eve’s parameters $\gamma_{\vec{a},\vec{x}}$ and then maximize over the honest parties’ measurement settings. The resulting curves therefore capture both 
the optimal honest-party strategy and the worst-case eavesdropping attack. For the case $d=2, N=2$, shown in Fig.~1, we consider the bipartite qubit scenario. The horizontal axis 
represents the visibility, while the vertical axis shows the optimized key rate obtained through the above optimization procedure.
For $d=2, N=3$, shown in Fig.~3, we consider the tripartite qubit scenario. Alice performs measurements 
in the $Z$ basis, while Bob and Charlie use projective measurements in the $X$--$Z$ plane parametrized 
by angles $\theta_1$ and $\theta_2$. In the numerical optimization, without loss of generality, Alice’s and Bob’s measurements are 
fixed to the $Z$ basis, and the maximization is performed only over $\theta_2$. The optimized key rate 
is then plotted as a function of the visibility.

Finally, for the bipartite qutrit case $(d=3, N=2)$, shown in Fig.~4, both Alice and Bob perform 
projective measurements in the qutrit $Z$ basis. Applying the same max--min optimization procedure 
described above, we compute the key rate as a function of the visibility.

\end{proof}
\end{document}